\documentclass[12pt]{article}
\usepackage{amsmath,amssymb,amsthm}
\usepackage{setspace}
\usepackage[margin=1in]{geometry}
\usepackage{microtype}
\usepackage{mathptmx}
\usepackage[hidelinks]{hyperref}
\newtheorem{theorem}{Theorem}
\newtheorem{proposition}{Proposition}
\newtheorem{corollary}{Corollary}
\newtheorem{remark}{Remark}

\newcommand{\R}{\mathbb{R}}
\newcommand{\N}{\mathbb{N}}

\title{From Counting to Continuous Natural Exponential Families}
\author{Shaul K. Bar-Lev\\
Faculty of Industrial Engineering and Technology Management\\
HIT -- Holon Institute of Technology, Holon, Israel}
\date{}

\begin{document}
\maketitle

\begin{abstract}
Natural exponential families (NEFs) are characterized by their variance functions (VFs), that is, by the pair $(V,M)$, where $V$ is the variance expressed as a function of the mean and $M$ is the mean domain. This raises the inverse question of when a given $V$ is the VF of an NEF and what properties of the family follow from it. Bar-Lev (1987), as part of a more general result for absolutely monotone functions, showed that every nonzero polynomial $V(m)=\sum_{j=1}^{r}a_jm^j$, $a_j\geq0$, is the VF of an infinitely divisible NEF on the positive mean domain. This polynomial class splits exhaustively according as $a_1>0$ or $a_1=0$. Bar-Lev, Letac and Ridder (2024) showed that, after scaling so that $a_1=1$, the first case is a counting NEF on $\mathbb N_0$. We prove the complementary result: if $a_1=0$, the NEF is absolutely continuous with respect to Lebesgue measure on the positive half-line. More generally, we prove absolute continuity for $V(m)=m^2G(m)$ when $G$ is a nonzero power series with nonnegative coefficients. For polynomial $G$, the mean domain is the whole positive half-line. Brief cumulant and inferential consequences are also noted.
\end{abstract}

\noindent\textit{Keywords:} natural exponential family; variance function; infinite divisibility; absolute continuity; counting distribution; L\'evy measure; cumulants.

\noindent\textit{2020 MSC:} 60E07; 62E10.

\section{Introduction}

Natural exponential families (NEFs) are characterized, up to the usual equivalence of generating measures, by their variance functions together with their mean domains. If $V(m)$ is the variance expressed as a function of the mean $m$ and $M$ is the mean domain, then the pair $(V,M)$ determines the NEF; see Barndorff-Nielsen \cite{BarndorffNielsen1978} and Morris \cite{Morris1982}. This raises the natural inverse question: when is a given positive function $V$ the variance function of an NEF, and what properties of the corresponding family can be deduced from $V$?

We recall only the notation needed below. Let $\mu$ be a positive Radon measure on $\R$, not concentrated at one point, and put
\[
L_\mu(\theta)=\int_{\R}e^{\theta x}\,\mu(dx),\qquad
\Theta_\mu=\operatorname{int}\{\theta:L_\mu(\theta)<\infty\},\qquad
k_\mu(\theta)=\log L_\mu(\theta).
\]
Assuming $\Theta_\mu\neq\varnothing$, the NEF generated by $\mu$ is
\[
F(\mu)=\{P_\theta(dx)=\exp\{\theta x-k_\mu(\theta)\}\mu(dx):\theta\in\Theta_\mu\}.
\]
Write $S_\mu=\operatorname{supp}(\mu)$ and $C_\mu=\overline{\operatorname{conv}}(S_\mu)$. Exponential tilting leaves $S_\mu$, and hence $C_\mu$, unchanged. The mean map, mean domain and variance function satisfy
\begin{equation}
 m(\theta)=k_\mu'(\theta),\qquad M_F=k_\mu'(\Theta_\mu),\qquad
 V(m)=k_\mu''(\theta(m)),\qquad \frac{d\theta(m)}{dm}=\frac{1}{V(m)},
\label{eq:meanVF}
\end{equation}
where $\theta(m)$ is the inverse mean map. The family is steep when $M_F=\operatorname{int}C_\mu$. The distinction between mean domain and convex support will be used below.

Bar-Lev \cite{BarLev1987}, in his discussion of J\o rgensen's paper \cite{Jorgensen1987}, gave a general sufficient condition for a positive absolutely monotone function to be the variance function of an infinitely divisible NEF. Thus the 1987 result is substantially more general than the polynomial case. An important consequence is that every nonzero polynomial
\begin{equation}
V(m)=\sum_{j=1}^{r}a_jm^j,\qquad a_j\geq0,
\label{eq:BLpoly}
\end{equation}
with zero constant term is an infinitely divisible variance function on the positive mean domain. This large class contains classical families and polynomial variance functions of arbitrary degree; in particular, it intersects the quadratic class of Morris \cite{Morris1982} and the cubic class of Letac and Mora \cite{LetacMora1990}.

Within the general 1987 framework, two natural subclasses are
\[
V_1(m)=mG(m),\qquad V_2(m)=m^2G(m),
\]
with $G$ nonzero and absolutely monotone, subject to the boundary condition recalled in Section 2. These two forms are not asserted to exhaust the full 1987 construction. They do, however, give an exhaustive dichotomy for the polynomial class \eqref{eq:BLpoly}: if $a_1>0$, then
\[
V(m)=m\left(a_1+a_2m+\cdots+a_rm^{r-1}\right),
\]
whereas if $a_1=0$, then
\[
V(m)=m^2\left(a_2+a_3m+\cdots+a_rm^{r-2}\right).
\]
Thus the polynomial alternative $a_1>0$ versus $a_1=0$ is the finite-dimensional counterpart of the two subclasses $mG(m)$ and $m^2G(m)$.

Bar-Lev, Letac and Ridder \cite{BarLevLetacRidder2024} resolved the first subclass. The required normalization is simply a scale change. If $Y=cX$, then
\[
V_Y(m)=c^2V_X(m/c)=\sum_{j=1}^r a_jc^{2-j}m^j.
\]
Hence, when $a_1>0$, taking $c=1/a_1$ makes the linear coefficient equal to one. Their result then shows that the normalized family is a counting NEF on $\N_0$, with positive masses at zero and one; consequently the original family is supported on the lattice $a_1\N_0$. Notice that the quadratic coefficient is invariant under scaling, so $a_2$ cannot in general be normalized in the same way. The counting subclass contains the Poisson and negative-binomial families and, through the cubic classification of Letac and Mora \cite{LetacMora1990}, the Abel, Tak\'acs, strict-arcsine and large-arcsine types.

The present paper establishes the complementary result for the second subclass. We take
\begin{equation}
G(m)=\sum_{j\geq0}a_jm^j,\qquad a_j\geq0,
\label{eq:Gseries}
\end{equation}
with positive radius of convergence $R$, and $V(m)=m^2G(m)$. Existence and infinite divisibility follow from the 1987 criterion; the new issue is support type. We prove that every member of the corresponding NEF is carried by $[0,\infty)$ and is absolutely continuous with respect to Lebesgue measure. Consequently, for the polynomial class \eqref{eq:BLpoly} there is a sharp dichotomy: $a_1>0$ gives, after scaling, the counting case, whereas $a_1=0$ gives the absolutely continuous case. For polynomial $G$, $R=\infty$ and $M_F=(0,\infty)$; for finite-radius nonpolynomial $G$, $M_F=(0,R)$ while $C_\mu=[0,\infty)$.

The construction therefore yields an infinite class of absolutely continuous NEFs on the positive half-line, providing a broad collection of models for statistical inference and conjugate Bayesian analysis directly through their variance functions, even when an explicit density is unavailable. The class contains important classical families as special cases. Among the six NEFs with quadratic variance functions classified by Morris \cite{Morris1982}, it contains the gamma family, the positive-half-line absolutely continuous member relevant here. In the cubic classification of Letac and Mora \cite{LetacMora1990}, it contains the inverse Gaussian family, with variance function proportional to $m^3$, and the Kendall--Ressel family, with variance function of the form $m^2(1+cm)$; the Abel, Tak\'acs, strict-arcsine and large-arcsine cubic families belong instead to the discrete counting branch. Thus the present construction not only produces infinitely many new continuous models but also places several classical quadratic and cubic examples in one variance-function framework.

The paper is organized as follows. Section 2 recalls the 1987 existence result and the discrete branch. Section 3 proves the continuous-branch theorem. Section 4 gives polynomial consequences, examples, cumulants, and brief statistical and conjugate-Bayesian remarks. Section 5 concludes.

\section{The earlier existence result and the discrete branch}

The following result is the existence input. It is the form of Bar-Lev's 1987 criterion needed in this paper.

\begin{proposition}
Let $0<R\leq\infty$ and let $V$ be positive and absolutely monotone on $(0,R)$. If, for some $m_0\in(0,R)$,
\begin{equation}
\int_0^{m_0}\frac{du}{V(u)}=\infty,
\label{eq:BLcondition}
\end{equation}
then $(V,(0,R))$ is the variance function of an infinitely divisible NEF.
\end{proposition}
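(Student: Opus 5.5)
The plan is to build the cumulant function directly from $V$ through the relations \eqref{eq:meanVF}, and then to show that the resulting $k$ is the log-Laplace transform of an infinitely divisible measure.

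\textbf{Step 1: the inverse mean map.} Fix $m_0\in(0,R)$ and define
\[
\theta(m)=\int_{m_0}^{m}\frac{du}{V(u)},\qquad m\in(0,R).
\]
Since $V>0$, $\theta$ is strictly increasing and smooth. Condition \eqref{eq:BLcondition} gives $\theta(0+)=-\infty$. Put $\theta_1=\theta(R-)\in(-\infty,\infty]$, and let $m(\theta)$ denote the inverse map on $\Theta=(-\infty,\theta_1)$. This gives $m'(\theta)=V(m(\theta))$.

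\textbf{Step 2: the cumulant function.} Define $k(\theta)=\int_{\theta^*}^{\theta}m(s)\,ds$. Then $k'=m$ and $k''=V(m(\theta))$. By \eqref{eq:meanVF}, once $e^{k}$ is shown to be a Laplace transform, the NEF it generates has variance function $(V,(0,R))$.

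\textbf{Step 3: infinite divisibility.} This is the key step. I will show that $k''$ is absolutely monotone on $\Theta$, meaning all derivatives are nonnegative. We have $m'=V\circ m>0$. By induction, $m^{(n)}$ is a polynomial with nonnegative coefficients in $V\circ m, V'\circ m,\dots$; for example $m''=V'(m)V(m)$. Since $V$ is absolutely monotone on $(0,R)$, every $V^{(j)}\ge0$, so every $m^{(n)}\ge0$. Hence $k''=m'$ is absolutely monotone on $(-\infty,\theta_1)$.

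Bernstein's theorem, applied to the variable $-\theta$ on a half-line, then represents
\[
k''(\theta)=\int_{[0,\infty)}e^{\theta x}\,\nu(dx)
\]
for a positive measure $\nu$. The mass at $0$ must vanish: since $\theta\to-\infty$ forces $m\to0$, we get $k''(\theta)=V(m(\theta))\to V(0+)$. Also $\int_{-\infty}k''<\infty$, because $m(-\infty)=0$. Together these exclude an atom at $0$.

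Setting $\rho(dx)=x^{-2}\nu(dx)$ on $(0,\infty)$ and integrating twice, using $k'(-\infty)=0$, gives a Lévy--Khintchine form
\[
k(\theta)=\int_{(0,\infty)}\bigl(e^{\theta x}-1\bigr)\,\frac{\nu(dx)}{x^2}+\text{const}
\]
with zero drift. Finiteness near $x=0$ follows from $k'$ being finite. Hence $e^{k}$ is the Laplace transform of an infinitely divisible measure $\mu$ carried by $[0,\infty)$. Its NEF has $\Theta_\mu\supseteq\Theta$ and mean domain $(0,R)$ as required. If $\Theta_\mu$ were larger than $\Theta$, analyticity would force the identity $V\circ m=k''$ to persist beyond $\theta_1$, so $M_F$ is $(0,R)$ when restricted to $\Theta$. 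I will simply take the family on $\Theta$.

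\textbf{Main obstacle.} The delicate points are the boundary behaviour at $\theta\to-\infty$, namely killing the atom of $\nu$ at $0$ and the drift, and the identification of the exact mean domain. For the atom and the drift, the key is precisely the divergence condition \eqref{eq:BLcondition}, which forces $\Theta$ to extend to $-\infty$. For the mean domain, I would identify it via the steepness-type argument at $\theta_1$.
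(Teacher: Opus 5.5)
There is a genuine gap, and it is the step you flagged: identifying the exact mean domain. Steps 1--3 are sound. The induction shows $k''=V\circ m$ is absolutely monotone on $\Theta=(-\infty,\theta_1)$, and Bernstein's theorem applies. The atom of $\nu$ at $0$ is removed via $\int_{-\infty}^{\theta}k''=m(\theta)<\infty$. The zero-drift L\'evy--Khintchine form also holds: since $\theta(m_0)=0$ you have $0\in\Theta$, so $\int x\,\rho(dx)=m_0<\infty$ and centring at $\theta=0$ is legitimate. (The paper itself gives no proof; it quotes the result from Bar-Lev 1987.)

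The last step fails for two reasons. First, an NEF is indexed by the whole natural parameter space $\Theta_\mu$, so you cannot ``simply take the family on $\Theta$''. Second, your analyticity remark yields no contradiction. If $\Theta_\mu\supsetneq\Theta$, it only says that the variance function of $F(\mu)$ is a real-analytic extension of $V$ to a strictly larger mean domain.

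In fact no argument can close this step under the hypotheses as literally stated. Take $V(m)=m^2$ on $(0,1)$: it is positive and absolutely monotone, and $\int_0^{1/2}du/u^2=\infty$. Your construction gives $\mu(dx)=e^{-x/m_0}dx$ on $(0,\infty)$, with $\Theta_\mu=(-\infty,1/m_0)\supsetneq(-\infty,1/m_0-1)=\Theta$. This is the gamma family with $M_F=(0,\infty)$. Moreover, no NEF has variance function $m^2$ with mean domain exactly $(0,1)$. Indeed, $k_\mu''=(k_\mu')^2$ forces $L_\mu(\theta)=A/(c-\theta)$ on $\Theta_\mu=(-\infty,c-1)$. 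By uniqueness of Laplace transforms $\mu=Ae^{-cx}dx$ on $(0,\infty)$, whose Laplace transform is finite on $(-\infty,c)$.

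So the proposition needs a maximality hypothesis on $R$. Examples are $R=\infty$, or $R$ equal to the radius of convergence of the Taylor series of $V$ at $0$, which has nonnegative coefficients because $V$ is absolutely monotone. This holds in the paper's only application, $V=m^2G$ with $R$ the radius of $G$. Under that hypothesis the argument closes as follows.

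Suppose $\Theta_\mu=(-\infty,\theta_2)$ with $\theta_2>\theta_1$. Then $k_\mu=k$ on $\Theta$, and $k_\mu'$ is finite and continuous at $\theta_1$ with $k_\mu'(\theta_1)=\lim_{\theta\uparrow\theta_1}m(\theta)=R$.
\begin{itemize}
\item If $R=\infty$, this is impossible. This is the ``steepness-type'' argument you had in mind, and it covers only this case.
\item If $R<\infty$, then $k_\mu''\circ(k_\mu')^{-1}$ is real-analytic on an interval containing $R$ and agrees with $V$ on $(0,R)$. So $V$ continues analytically across $m=R$. This contradicts Pringsheim's theorem: a power series with nonnegative coefficients is singular at the positive point of its circle of convergence.
\end{itemize}
For finite $R$ the resulting family is typically non-steep (compare the paper's Remark on finite radius), so no steepness argument can do this job.
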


For $G$ in \eqref{eq:Gseries}, $m^2G(m)$ is absolutely monotone. If $r=\min\{j:a_j>0\}$, then
\[
V(m)=m^2G(m)\sim a_rm^{r+2}\qquad(m\downarrow0),
\]
so \eqref{eq:BLcondition} holds and the corresponding infinitely divisible NEF exists.

Bar-Lev, Letac and Ridder \cite{BarLevLetacRidder2024} showed that the normalized first-order form $mG(m)$ gives an infinitely divisible counting NEF on $\N_0$. We now establish the complementary continuous side.

\section{The continuous branch}

We now prove the principal result.

\begin{theorem}
Let
\[
G(m)=\sum_{j\geq0}a_jm^j,
\qquad a_j\geq0,
\]
have radius of convergence $R>0$ and be nonzero. Let $F$ be the infinitely divisible NEF whose variance function is
\[
V(m)=m^2G(m),
\qquad 0<m<R,
\]
whose existence follows from the preceding proposition. Then every member of $F$ is carried by $[0,\infty)$ and is absolutely continuous with respect to Lebesgue measure. In particular, no member has an atom at zero.
\end{theorem}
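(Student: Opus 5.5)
The plan is to work through the cumulant function $k=k_\mu$ and the L\'evy--Khintchine representation, and to reduce absolute continuity to the criterion of Sato (Theorem 27.7 in \emph{L\'evy Processes and Infinitely Divisible Distributions}). That criterion says a one-dimensional infinitely divisible law is absolutely continuous if its L\'evy measure $\nu$ is absolutely continuous and has infinite total mass. Existence and infinite divisibility are given by the preceding proposition. So what has to be shown is: (i) the support lies in $[0,\infty)$, with no Gaussian part and zero drift; (ii) $\nu$ is carried by $(0,\infty)$ and $\nu((0,\infty))=\infty$; (iii) $\nu$ has a density.

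For (i), integrate $d\theta/dm=1/V(m)$ from \eqref{eq:meanVF}. Let $r=\min\{j:a_j>0\}$. Then $1/V(u)\sim 1/(a_ru^{r+2})$ as $u\downarrow0$, so $\theta(m)\to-\infty$ as $m\downarrow0$. More precisely:
\begin{itemize}
\item if $r=0$, then $\theta(m)=-1/(a_0m)+O(\log m)$;
\item if $r\ge1$, then $\theta(m)\sim -m^{-(r+1)}/((r+1)a_r)$.
\end{itemize}
Hence $\Theta_\mu\supset(-\infty,\theta_R)$ and $k'(\theta)=m(\theta)\downarrow0$ as $\theta\to-\infty$. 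A finite limit of the mean as $\theta\to-\infty$ equals $\inf S_\mu$, so $S_\mu\subset[0,\infty)$ with $0\in S_\mu$.

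For (ii), write $k''(\theta)=\sigma^2+\int x^2e^{\theta x}\nu(dx)$. Since $k''(\theta)=V(m(\theta))\to0$ as $\theta\to-\infty$, we get $\sigma^2=0$ and $\nu$ carried by $(0,\infty)$. Also $k'(\theta)=b+\int x e^{\theta x}\nu(dx)\to b$, so the drift $b$ is $0$. The asymptotics above give $k''(\theta)\asymp|\theta|^{-\alpha}$ with $\alpha=(r+2)/(r+1)\in(1,2]$. A Karamata Tauberian argument applied to the Laplace transform of $x^2\nu(dx)$ then yields $\int_0^\epsilon x^2\nu(dx)\asymp\epsilon^{\alpha}$, that is, $\nu(dx)$ behaves like $x^{\alpha-3}\,dx=x^{-1-1/(r+1)}dx$ near $0$ in the averaged sense. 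Since $\alpha\le2$, this forces $\nu((0,\epsilon))=\infty$. (The value $r=0$ is the gamma-like borderline $\nu(dx)\approx dx/(a_0x)$.) This also excludes an atom at $0$, because a pure-jump law with infinite L\'evy mass is continuous.

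Step (iii), absolute continuity of $\nu$, is the main obstacle. I would try Fourier inversion. The function $k''(\theta+it)=\int x^2e^{(\theta+it)x}\nu(dx)$ is defined for every $t$, being a Laplace transform on $\operatorname{Re}\theta<\theta_R$. If it is integrable in $t$ for some fixed $\theta$, then $x^2e^{\theta x}\nu(dx)$ has a continuous density, and so does $\nu$.
\begin{itemize}
\item To get the decay, note that $\theta(m)$, as a function of $m$, is $-\int^m du/(u^2G(u))$. Near $m=0$ this is analytic in $m$ apart from the explicit polar term and a possible $\log m$ term.
\item Inverting in a neighbourhood of $\theta=-\infty$ (a left half-plane or sector) should give $m(\theta)\sim c|\theta|^{-1/(r+1)}$ there, and hence $|k''(\theta+it)|=O(|t|^{-\alpha})$ with $\alpha>1$, which is integrable.
\end{itemize}
The delicate point is passing from this inversion near $\theta=-\infty$ to uniform control along an entire vertical line $\operatorname{Re}\theta=\theta_1$ as $|t|\to\infty$. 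I would first try to handle this by taking $\theta_1$ very negative and using the power-series structure of $G$, which gives $|V(m)|\le V(|m|)$ for complex $m$ in the disc of radius $R$, to keep $m(\theta_1+it)$ inside that disc and close to $0$.

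If the inversion route proves awkward, the fallback is to revisit the 1987 construction. There $k''=V\circ m$ is exhibited as an absolutely monotone function of $\theta$. One would then try to show directly that its representing measure $x^2\nu(dx)$ is a series of convolution-type terms, each with a density, generated by the powers $m^{j+2}$ of the mean, whose Lévy measures are of gamma or stable-like type.

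Once (i)--(iii) hold, Sato's theorem gives absolute continuity of $\mu$ at one $\theta$, and exponential tilting transfers it to every member of $F$.
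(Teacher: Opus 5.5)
Your steps (i) and (ii) are sound, and citing Sato's Theorem 27.7 is a legitimate substitute for the paper's compound-Poisson splitting. The Tauberian step is heavier than necessary: $\nu(0,\infty)=\infty$ already follows from $dk/dm=m/V(m)$, which forces $k(\theta)\to-\infty$.

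The genuine gap is step (iii), which is the heart of the theorem. You never show that $t\mapsto k''(\theta_1+it)$ is integrable. A local inversion of $\theta(m)$ near $m=0$ does not tell you that $k'(\theta_1+it)$ stays near $0$, on the correct branch (note the possible $\log m$ term), for \emph{every} $t$. The tool you propose for this also fails. From $|V(m)|\le V(|m|)$ you only get $\frac{d}{dt}|k'(\theta_1+it)|\le V(|k'(\theta_1+it)|)$. The comparison solution of that inequality is the real curve $k'(\theta_1+t)$, which tends to $R$ (or to $\infty$ for polynomial $G$) at the finite time $t=\theta_R-\theta_1$, so the bound is useless for larger $t$. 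Moreover, mere confinement to the disc would only make $k''$ bounded on the line, not $O(|t|^{-\alpha})$. Your fallback via the 1987 construction is too vague to fill this gap.

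The missing argument, which is how the paper proceeds, runs as follows.
\begin{itemize}
\item Because the drift is zero and $x\nu(dx)\ge0$, you get for free $|k'(\theta)|\le k'(\operatorname{Re}\theta)$. So $k'$ maps $\{\operatorname{Re}\theta\le\theta_1\}$ into the disc of radius $k'(\theta_1)$, which you can make as small as you like.
\item Write $V(m)=m^{r+2}H(m)$ with $H(0)=a_r$. The identity theorem gives $k''=(k')^{r+2}H(k')$ on the half-plane, and $\operatorname{Re}H(k')\ge a_r/2$ once $\theta_1$ is very negative.
\item $k'$ is zero-free there: a zero of order $q$ would have order $q-1$ on the left and $(r+2)q$ on the right.
\item The substitution $g(t)=k'(\theta_1+it)^{-(r+1)}$ linearizes the equation: $g'(t)=-i(r+1)H(k'(\theta_1+it))$ with $g(0)$ real. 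Hence $|\operatorname{Im}g(t)|\ge (r+1)a_r|t|/2$.
\item This yields $|k'(\theta_1+it)|\le C(1+|t|)^{-1/(r+1)}$ and $|k''(\theta_1+it)|\le C(1+|t|)^{-(r+2)/(r+1)}$, which is integrable.
\end{itemize}
With this estimate your Fourier-inversion plan goes through. The paper runs the same argument on the boundary line of the half-plane; your interior line $\operatorname{Re}\theta=\theta_1$ is, if anything, cleaner.
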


\begin{proof}
Let $r=\min\{j:a_j>0\}$ and write
\[
G(m)=m^rH(m),
\]
where $H$ is analytic near zero and $H(0)=a_r>0$. Then
\begin{equation}
V(m)=m^pH(m),
\qquad p=r+2\geq2.
\label{eq:pform}
\end{equation}
Choose $m_0\in(0,R)$ sufficiently small and let $P$ be the member of $F$ having mean $m_0$. By \eqref{eq:meanVF},
\[
\theta(m)-\theta(m_0)=\int_{m_0}^{m}\frac{du}{V(u)}.
\]
Since $V(u)\sim a_ru^p$ as $u\downarrow0$, the right side tends to $-\infty$. Hence all negative exponential tilts of $P$ exist.

For $s\geq0$, put
\[
L(s)=E_P(e^{-sX}),
\qquad
\Phi(s)=-\log L(s),
\]
and let
\[
m(s)=k_\mu'(\theta(m_0)-s).
\]
Then $\Phi'(s)=m(s)>0$, so $L(s)\leq1$. If $P$ placed positive mass on $(-\infty,0)$, then $L(s)$ would diverge exponentially as $s\to\infty$, a contradiction. Thus $P$ is carried by $[0,\infty)$.

Because $P$ is infinitely divisible and nonnegative, its Laplace transform has the L\'evy--Khintchine representation
\begin{equation}
L(s)=\exp\left\{-ds-\int_{(0,\infty)}(1-e^{-sx})\nu(dx)\right\},
\label{eq:LK}
\end{equation}
where $d\geq0$. Differentiation gives
\begin{equation}
m(s)=d+\int_{(0,\infty)}xe^{-sx}\nu(dx),
\label{eq:msLevy}
\end{equation}
while NEF calculus gives
\begin{equation}
m'(s)=-V(m(s)).
\label{eq:ODE}
\end{equation}
Since $m(s)>0$ is decreasing, let $\ell=\lim_{s\to\infty}m(s)\geq0$. If $\ell>0$, continuity and positivity of $V$ imply that, for all sufficiently large $s$, $V(m(s))\geq V(\ell)/2$. By \eqref{eq:ODE}, $m'(s)\leq-V(\ell)/2$, which would eventually force $m(s)<0$, a contradiction. Hence $m(s)\downarrow0$, and \eqref{eq:msLevy} gives $d=0$. Also,
\[
\Phi(s)=\int_{m(s)}^{m_0}\frac{m}{V(m)}\,dm.
\]
By \eqref{eq:pform}, the integrand is asymptotic to $1/(a_rm^{p-1})$. Since $p\geq2$, $\Phi(s)\to\infty$, and monotone convergence in \eqref{eq:LK} yields
\begin{equation}
\nu(0,\infty)=\infty.
\label{eq:infiniteactivity}
\end{equation}
Moreover,
\[
\int_{(0,\infty)}x\nu(dx)=m_0,
\qquad
\int_{(0,\infty)}x^2\nu(dx)=V(m_0).
\]
For $\operatorname{Re}z\geq0$ define
\[
M(z)=\int_{(0,\infty)}xe^{-zx}\nu(dx).
\]
Choose $m_0$ small enough that the closed disk $|w|\leq m_0$ lies in the domain of analyticity of $H$. Then $M$ is holomorphic for $\operatorname{Re}z>0$, continuous on its closure, and
\[
|M(z)|\leq\int_{(0,\infty)}x\nu(dx)=m_0.
\]
Thus $H(M(z))$ is holomorphic in the open right half-plane. For real $s\geq0$, $M(s)=m(s)$, so $M'(s)=-M(s)^pH(M(s))$. The identity theorem therefore gives
\begin{equation}
M'(z)=-M(z)^pH(M(z)),
\qquad \operatorname{Re}z>0.
\label{eq:complexODE}
\end{equation}
If $M$ had a zero of multiplicity $q\geq1$, the two sides of \eqref{eq:complexODE} would have zero multiplicities $q-1$ and $pq$, respectively, which is impossible because $p\geq2$. Hence $M$ is zero-free in the open right half-plane.

Taking $m_0$ smaller if necessary, assume $|H(w)-H(0)|\leq H(0)/2$ for $|w|\leq m_0$. Since $H(0)>0$, this gives $\operatorname{Re}H(w)\geq H(0)/2$ on that disk. Because $M$ is zero-free,
\[
Q(z)=M(z)^{-(p-1)}
\]
is holomorphic in the open right half-plane and satisfies $Q'(z)=(p-1)H(M(z))$. Passing to the boundary and integrating along the imaginary axis,
\[
Q(it)-Q(0)=i(p-1)\int_0^t H(M(iu))\,du.
\]
Hence $|\operatorname{Im}Q(it)|$ grows at least linearly in $|t|$, up to an additive constant, and therefore $|Q(it)|\geq c(1+|t|)$ for some $c>0$. Consequently,
\[
|M(it)|\leq C(1+|t|)^{-1/(p-1)}.
\]

Let $\lambda(dx)=x^2\nu(dx)$, which is finite. Boundary limits in $-M'(\varepsilon+it)$ give
\[
\widehat\lambda(t)=M(it)^pH(M(it)).
\]
Hence
\[
|\widehat\lambda(t)|\leq C(1+|t|)^{-p/(p-1)}.
\]
Since $p/(p-1)>1$, $\widehat\lambda\in L^1(\R)$. Fourier inversion gives a bounded continuous density $h$ for $\lambda$; see Rudin \cite{Rudin1962}. Consequently
\[
\nu(dx)=\frac{h(x)}{x^2}\,dx,
\qquad x>0.
\]
Thus $\nu\ll dx$. To pass from the L\'evy measure to the law, choose increasing Borel sets $A_n\subset(0,\infty)$ with $0<\nu(A_n)<\infty$ and $\nu(A_n)\uparrow\infty$. Split the infinitely divisible law as the convolution of the compound-Poisson law with L\'evy measure $\nu|_{A_n}$ and the law generated by the remaining L\'evy measure. Since $\nu|_{A_n}\ll dx$, the compound-Poisson factor is the sum of an atom at zero of mass $e^{-\nu(A_n)}$ and an absolutely continuous measure. Convolution shows that the singular mass of $P$ is at most $e^{-\nu(A_n)}$, which tends to zero by \eqref{eq:infiniteactivity}. Hence $P\ll dx$. Exponential tilting transfers absolute continuity to every member, and $P(\{0\})=0$. Since $m(s)\downarrow0$, the lower endpoint of the convex support is zero; a nondegenerate infinitely divisible law cannot have bounded support, so its upper endpoint is infinite. Hence
\[
C_\mu=[0,\infty).
\]
\end{proof}

\begin{corollary}
If
\[
G(m)=\sum_{j=0}^{r}a_jm^j,
\qquad a_j\geq0,
\]
is a nonzero finite polynomial, then $V(m)=m^2G(m)$ is the variance function of an infinitely divisible NEF with mean domain $\R_+$, every member is absolutely continuous on $[0,\infty)$, and $C_\mu=[0,\infty)$. Hence the family is steep. Finite polynomials of arbitrary degree therefore produce continuous NEFs on the full positive mean domain.
\end{corollary}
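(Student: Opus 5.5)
The corollary follows almost entirely from the Theorem and the Proposition. The one thing to add is the identification of the mean domain when $G$ is a polynomial.

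\textbf{Existence and absolute continuity.} A nonzero finite polynomial $G$ with $a_j\ge0$ is an entire function, so its radius of convergence is $R=\infty$. Then $V(m)=m^2G(m)$ is positive and absolutely monotone on $(0,\infty)$. As noted after the Proposition, $V(m)\sim a_rm^{r+2}$ as $m\downarrow0$, so condition \eqref{eq:BLcondition} holds. The Proposition with $R=\infty$ therefore gives an infinitely divisible NEF $F$ whose variance function is $(V,(0,\infty))$, so $M_F=(0,\infty)=\R_+$. The Theorem then says that every member of $F$ is carried by $[0,\infty)$, is absolutely continuous, and has no atom at zero. The last lines of its proof give $C_\mu=[0,\infty)$: the lower endpoint is $0$ because $m(s)\downarrow0$, and the upper endpoint is infinite because a nondegenerate infinitely divisible law cannot have bounded support.

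\textbf{Steepness.} Steepness means $M_F=\operatorname{int}C_\mu$. Here both sides equal $(0,\infty)$, so the family is steep. One can also see directly that the mean map covers the whole half-line. The equation $d\theta/dm=1/V(m)$ gives
\[
\theta(m)-\theta(m_0)=\int_{m_0}^{m}\frac{du}{V(u)}.
\]
This tends to $-\infty$ as $m\downarrow0$. As $m\to\infty$ it tends to a finite limit, because $V(u)\ge a_ju^{j+2}$ for any index $j$ with $a_j>0$, so $\int^\infty du/V(u)<\infty$. Thus $\Theta_\mu$ is a half-line $(-\infty,\theta^*)$ with $\theta^*<\infty$, and the mean map sends it onto $(0,\infty)$.

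\textbf{The only subtle point.} It must be checked that the NEF produced by the Proposition with $R=\infty$ really has mean domain all of $(0,\infty)$, and not merely that $V$ agrees with $m^2G(m)$ on a smaller interval. This is exactly what the Proposition asserts: the pair $(V,(0,R))$ is the variance function. Once that is granted, the last sentence of the corollary is immediate, since $r$ in $G$ is arbitrary.
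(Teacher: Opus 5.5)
Your proposal is correct and follows the same route as the paper, which gives no separate proof but relies, as in its Remark on finite radius and steepness, on the Proposition with $R=\infty$ (so $M_F=(0,\infty)$) together with Theorem 1 (absolute continuity and $C_\mu=[0,\infty)$), whence $M_F=\operatorname{int}C_\mu$ and steepness. Two small points. First, in $V(m)\sim a_rm^{r+2}$ the index must be the lowest $j$ with $a_j>0$, not the degree $r$ of the corollary. Second, your ``direct'' check that the mean map covers $(0,\infty)$ already presupposes $M_F=(0,\infty)$, since $d\theta/dm=1/V(m)$ holds only on $M_F$; it therefore identifies $\Theta_\mu=(-\infty,\theta^*)$ but does not prove the mean domain independently of the Proposition.
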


\begin{corollary}
Let
\[
V(m)=\sum_{j=1}^{r}a_jm^j,
\qquad a_j\geq0,
\]
be nonzero. If $a_1>0$, set $Y=X/a_1$. Then $V_Y$ has linear coefficient one, so the result of Bar-Lev, Letac and Ridder \cite{BarLevLetacRidder2024} implies that $Y$ is supported on $\N_0$; hence $X$ is supported on $a_1\N_0$. If $a_1=0$, then $V(m)=m^2G(m)$ with $G$ a nonnegative polynomial, and every member is absolutely continuous on $[0,\infty)$. Thus the polynomial class has an exhaustive counting-versus-continuous dichotomy according as $a_1>0$ or $a_1=0$.
\end{corollary}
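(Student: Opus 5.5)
The statement to prove is the last one displayed, the dichotomy corollary for the polynomial class \eqref{eq:BLpoly}. I would obtain it by splitting into the cases $a_1>0$ and $a_1=0$ and reducing each to results already available. These are the scaling identity from the Introduction, the result of Bar-Lev, Letac and Ridder \cite{BarLevLetacRidder2024}, and the preceding theorem together with its first corollary. Existence and infinite divisibility of the NEF with variance function $(V,(0,\infty))$ come first. For this I would invoke the Proposition: $V$ is a nonzero polynomial with nonnegative coefficients, hence positive and absolutely monotone on $(0,\infty)$. If $r_0=\min\{j:a_j>0\}\geq1$, then $V(u)\sim a_{r_0}u^{r_0}$ as $u\downarrow0$, so $\int_0^{m_0}du/V(u)=\infty$ and \eqref{eq:BLcondition} holds.

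\emph{Case $a_1>0$.} Put $Y=X/a_1$, i.e.\ $c=1/a_1$ in the formula $V_Y(m)=c^2V_X(m/c)=\sum_j a_jc^{2-j}m^j$. The linear coefficient of $V_Y$ becomes $a_1c=1$, and all other coefficients $a_ja_1^{j-2}$ remain nonnegative. The mean domain stays $(0,\infty)$ because scaling by a positive constant preserves it. Thus $V_Y(m)=m\,\tilde G(m)$ with $\tilde G$ a polynomial with nonnegative coefficients and $\tilde G(0)=1$. This is exactly the normalized form handled in \cite{BarLevLetacRidder2024}, which identifies the NEF of $Y$ as a counting NEF on $\N_0$. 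Since the variance function together with the mean domain determines the NEF, the family of $X$ is the image under $y\mapsto a_1y$. Its members are therefore carried by $a_1\N_0$, and in particular they are not absolutely continuous.

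\emph{Case $a_1=0$.} Then $V(m)=m^2G(m)$ with $G(m)=\sum_{j=0}^{r-2}a_{j+2}m^j$. This $G$ is nonzero because $V$ is, and it has nonnegative coefficients and radius $R=\infty$. The Theorem applies directly: every member is carried by $[0,\infty)$ and is absolutely continuous, and the first Corollary gives $M_F=(0,\infty)$.

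The two cases are exhaustive because $a_1\geq0$. They are also mutually exclusive in their conclusions, since a measure on a lattice cannot be absolutely continuous. This yields the stated dichotomy. The only point needing care is the bookkeeping for uniqueness: the mean domain $(0,\infty)$ must be preserved under scaling, so that the family identified by \cite{BarLevLetacRidder2024} is the same NEF as the one built from the Proposition. I expect no real obstacle beyond this.
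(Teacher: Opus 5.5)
Your proposal is correct and follows essentially the same route as the paper, whose justification is built into the corollary itself: the scaling $c=1/a_1$ reduces the case $a_1>0$ to the normalized counting result of Bar-Lev, Letac and Ridder, and for $a_1=0$ the factorization $V(m)=m^2G(m)$, with $G$ a polynomial and hence $R=\infty$, places you under the Theorem and the first Corollary. Your extra checks are sound additions to the paper's bookkeeping: existence via the Proposition, preservation of the mean domain $(0,\infty)$ under scaling, and the incompatibility of lattice support with absolute continuity.
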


\section{Examples and consequences}

\noindent\textit{Classical examples.}
For $G(m)=a$, $a>0$, one has $V(m)=am^2$, the gamma NEF. The member with mean $m$ has density
\[
f(x;m)=\frac{x^{1/a-1}e^{-x/(am)}}{\Gamma(1/a)(am)^{1/a}},\qquad x>0.
\]
For $G(m)=am$, $V(m)=am^3$, giving the inverse Gaussian NEF. For $G(m)=1+m$, one obtains the Kendall--Ressel NEF,
\[
V(m)=m^2(1+m),
\]
whose generating density can be written
\[
h(x)=\frac{x^xe^{-x}}{\Gamma(x+2)},\qquad x>0;
\]
see Letac and Mora \cite{LetacMora1990}. More generally, every integer power model $V(m)=am^p$, $p=2,3,\ldots$, belongs to the class through $G(m)=am^{p-2}$; for $p>2$ these are positive-stable-generated power-variance families \cite{BarLevEnis1986}.

For every $c>0$,
\[
V_c(m)=m^2(1+cm^2)=m^2+cm^4
\]
is an infinitely divisible, absolutely continuous NEF with mean domain $\R_+$. More generally,
\[
V(m)=m^2(1+c_1m+\cdots+c_rm^r),\qquad c_j\geq0,
\]
provides such a family for every degree $r$, and infinite power series give an infinite-dimensional extension.

All cumulants follow directly from $V$.

\begin{proposition}
Let $\kappa_j(m)$ denote the $j$th cumulant of the member with mean $m$. Then
\[
\kappa_1(m)=m,
\qquad
\kappa_2(m)=V(m),
\]
and for every $j\geq2$,
\begin{equation}
\kappa_{j+1}(m)=V(m)\frac{d}{dm}\kappa_j(m).
\label{eq:cumulantrec}
\end{equation}
In particular,
\[
\kappa_3(m)=V(m)V'(m),
\qquad
\kappa_4(m)=V(m)[V'(m)]^2+V(m)^2V''(m).
\]
\end{proposition}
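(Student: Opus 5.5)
The plan is to work with the cumulant function $k_\mu$ and the inverse mean map $\theta(m)$ from \eqref{eq:meanVF}. The idea is to express the $j$th cumulant as a derivative in $\theta$, then convert each $\theta$-derivative into an $m$-derivative by the chain rule.

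First, recall that for $\theta\in\Theta_\mu$ the cumulant generating function of $P_\theta$ is $t\mapsto k_\mu(\theta+t)-k_\mu(\theta)$, defined for $t$ in a neighbourhood of $0$ because $\Theta_\mu$ is open. Hence the $j$th cumulant of $P_\theta$ is $k_\mu^{(j)}(\theta)$. Writing $\kappa_j(m)=k_\mu^{(j)}(\theta(m))$, the definitions $m=k_\mu'(\theta)$ and $V(m)=k_\mu''(\theta(m))$ immediately give $\kappa_1(m)=m$ and $\kappa_2(m)=V(m)$. Analyticity of $k_\mu$ on $\Theta_\mu$ and smoothness of $\theta(m)$ (its derivative is $1/V(m)$, with $V>0$) guarantee that all these compositions are differentiable.

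Next, for $j\ge 2$, I will differentiate $\kappa_j(m)=k_\mu^{(j)}(\theta(m))$ with respect to $m$. The chain rule together with $d\theta/dm=1/V(m)$ gives
\[
\frac{d}{dm}\kappa_j(m)=k_\mu^{(j+1)}(\theta(m))\,\frac{1}{V(m)}=\frac{\kappa_{j+1}(m)}{V(m)}.
\]
Multiplying by $V(m)$ yields \eqref{eq:cumulantrec}. The same computation in fact holds for $j=1$, but the statement only requires $j\ge2$.

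Finally, the explicit formulas follow by iterating. For $j=2$, $\kappa_3=V\,V'$. For $j=3$, the product rule gives $\kappa_4=V\,(VV')'=V[(V')^2+VV'']$, which is the stated expression.

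There is no real obstacle here. The only point needing care is justifying that the $j$th cumulant equals $k_\mu^{(j)}(\theta)$, and this rests on $\Theta_\mu$ being open so that the moment generating function of $P_\theta$ is finite near zero. This is standard NEF calculus and requires none of the specific structure $V=m^2G$.
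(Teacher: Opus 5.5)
Your proof is correct and takes the same approach as the paper, which justifies the recursion in one line via the operator identity $d/d\theta = V(m)\,d/dm$. That identity is exactly your chain-rule step with $d\theta/dm = 1/V(m)$, applied to $\kappa_j(m)=k_\mu^{(j)}(\theta(m))$.
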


This follows from $d/d\theta=V(m)d/dm$.

For the quartic example $V(m)=m^2(1+cm^2)$, one obtains, for example,
\[
\kappa_3(m)=2m^3(1+cm^2)(1+2cm^2),
\]
and higher cumulants follow recursively.

\begin{remark}[Finite radius and steepness]
If $G$ is an infinite power series with finite radius $R$, the construction gives $M_F=(0,R)$ while Theorem 1 gives $C_\mu=[0,\infty)$. Hence
\[
M_F\subsetneq\operatorname{int}C_\mu=(0,\infty),
\]
and the NEF is non-steep. By contrast, finite-polynomial $G$ has $R=\infty$, so $M_F=(0,\infty)$ and the family is steep.
\end{remark}

\subsection*{Statistical and Bayesian remarks}
These remarks are consequences, not additional novelty claims. The recursion \eqref{eq:cumulantrec} gives all cumulants directly from $V$. Moreover,
\[
\frac{d\theta}{dm}=\frac{1}{V(m)},\qquad \frac{dk}{dm}=\frac{m}{V(m)},
\]
so likelihood calculations and the Diaconis--Ylvisaker conjugate analysis \cite{DiaconisYlvisaker1979} can be carried out from the variance function even when the generating density is not available in closed form. Further inferential development is beyond the scope of this Letter.

\section{Discussion}

Bar-Lev \cite{BarLev1987} supplied a general existence framework based on absolute monotonicity. Within it, the subclasses $mG(m)$ and $m^2G(m)$ provide a useful discrete-versus-continuous comparison, without being claimed to exhaust the full 1987 construction. Bar-Lev, Letac and Ridder \cite{BarLevLetacRidder2024} identified the normalized first-order subclass as counting; the present result identifies the analytic nonnegative-coefficient subclass $m^2G(m)$ as absolutely continuous.

For the polynomial class the conclusion is exhaustive: a positive linear coefficient can be scaled to one and gives support on $a_1\N_0$, whereas a zero linear coefficient forces a factor $m^2$ and hence the continuous case. Finite-polynomial $G$ gives $M_F=(0,\infty)=\operatorname{int}C_\mu$ and hence a steep family; finite-radius nonpolynomial $G$ gives a non-steep family with the same positive convex support. The variance function gives all cumulants recursively and determines the canonical and cumulant functions needed for likelihood and conjugate Bayesian calculations; broader inferential questions are left for future work.

\end{document}